\newcommand{\abc}[2]{%
\mbox{$1|\,#1\,|\sum #2$}%
}
\def\eg{{e.g.}}
\def\ie{{i.e.}}
\def\st{\text{s.t.}}
\def\setS{S}
\newtheorem{theorem}{Theorem}
\newtheorem{lemma}{Lemma}
\newtheorem{corollary}{Corollary}
\newtheorem{conjecture}{Conjecture}
\title{A 2.542-Approximation for Precedence Constrained Single Machine Scheduling with Release Dates and Total Weighted Completion Time Objective}
\author{Martin Skutella\footnote{TU Berlin, Institut f\"ur Mathematik, MA 5-2, Stra{\ss}e des 17.~Juni 136, 10623 Berlin, Germany, \href{mailto:martin.skutella@tu-berlin.de}{\texttt{martin.skutella@tu-berlin.de}}}
}
\begin{document}
\maketitle

\begin{abstract}
We present a $\sqrt{e}/(\sqrt{e}-1)$-approximation algorithm for the nonpreemptive scheduling problem to minimize the total weighted completion time of jobs on a single machine subject to release dates and precedence constraints. The previously best known approximation algorithm dates back to 1997; its performance guarantee can be made arbitrarily close to the Euler constant~$e$~\cite{SS-RANDOM97}.
\end{abstract}

\section{Introduction}

We consider the following classical machine scheduling problem denoted by \abc{r_j,\,prec}{w_jC_j} in the standard classification scheme of Graham, Lawler, Lenstra, and Rinnooy Kan~\cite{GLLRK79}. We are given a set of jobs~$N=\{1,2,\dots,n\}$ and for every job $j\in N$ a processing time~$p_j\geq0$, a release date~$r_j\geq0$, and a weight~$w_j\geq0$. The jobs $j\in N$ need to be processed during non-overlapping time intervals of length $p_j$, and $j$'s processing must not start before its release date~$r_j$. Moreover, there are precedence constraints given by a partial order ``$\prec$'' on~$N$ where $j\prec k$ means that job~$j$ must be completed before job~$k$ may be started, that is, $j$'s processing interval must precede~$k$'s. We may therefore without loss of generality assume throughout the paper that $j\prec k$ implies $r_j\leq r_k$. The objective is to minimize the total weighted completion time~$\sum_{j\in N}w_jC_j$ where $C_j$ denotes the first point in time at which $j$'s processing is completed. 

\paragraph{Complexity.}
Even for unit job weights, the special cases of the problem without non-trivial release dates \abc{prec}{C_j} (\ie, $r_j=0$ for all $j\in N$) or without precedence constraints \abc{r_j}{C_j} are strongly NP-hard; see, \eg,~\cite[problem~SS4]{GJ79}. In preemptive scheduling, the processing of a job may be repeatedly interrupted and resumed at a later point in time. In the absence of precedence constraints, the problem with unit job weights \abc{r_j,\,pmtn}{C_j} can be solved in polynomial time~\cite{Bak74}, but for arbitrary weights \abc{r_j,\,pmtn}{w_jC_j} is strongly NP-hard. Without non-trivial release dates preemptions are superfluous such that \abc{prec,\,pmtn}{C_j} is equivalent to \abc{prec}{C_j} and thus strongly NP-hard.

\paragraph{List scheduling.}
Before dipping into the rich history of approximation algorithms for these scheduling problems, we first discuss the most important algorithmic ingredient for both heuristic and exact solutions: \emph{list scheduling}. Consider a list representing a total order on the set of jobs~$N$, extending the given partial order~``$\prec$''. A straightforward way to construct a feasible schedule is to process the jobs in the given order as early as possible with respect to release dates. A schedule constructed in this way is a \emph{list schedule}.

Depending on the given list and the release dates of jobs, the machine might remain idle when one job is completed but the next job in the list is not yet released.  On the other hand, if job preemptions are allowed, it is certainly not advisable to leave the machine idle while another job at a later position in the list is already available (released) and waiting. Instead, we better start this job and preempt it from the machine as soon as the next job in the list is released.  In \emph{preemptive list scheduling} we process at any point in time the first available job in the list.  The resulting preemptive schedule is feasible (as $j\prec k$ implies $r_j\leq r_k$) and is called \emph{preemptive list schedule}.  

\paragraph{Known techniques and results.}
There is a vast literature on approximation algorithms for the various scheduling problems mentioned above. Here we only mention those results that are particularly relevant in the context of this paper and refer to Chekuri and Khanna~\cite{ChekuriKhanna04} for a more comprehensive overview. Various kinds of linear programming (LP) relaxations have proved to be useful in designing approximation algorithms. One of the simplest and most intuitive classes of LP relaxations is based on completion time variables only. These LP relaxations were introduced by Queyranne~\cite{Queyranne93} and first used in the context of approximation algorithms by Schulz~\cite{schulz96}, who presents a $2$-approximation algorithm for the problem \abc{prec}{w_jC_j} and a $3$-approximation algorithm for \abc{r_j,\,prec}{w_jC_j}; see also Hall, Schulz, Shmoys, and Wein~\cite{HSSW97}. These algorithms compute an optimal LP solution and then do list scheduling in order of increasing LP completion times. Moreover, Hall et al.~\cite{HSSW97} show that preemptive list scheduling in order of increasing LP completion times is a $2$-approximation algorithm for \abc{r_j,\,prec,\,pmtn}{w_jC_j}.

Phillips, Stein, and Wein~\cite{PSW98} and Hall, Shmoys, and Wein~\cite{HSW96} introduce the idea of list scheduling in order of so-called $\alpha$-points to convert preemptive schedules to nonpreemptive ones. For $\alpha\in(0,1]$, the $\alpha$-point of a job with respect to a preemptive schedule is the first point in time when an $\alpha$-fraction of the job has been completed. Goemans~\cite{Goemans97} and Chekuri, Motwani, Natarajan, and Stein~\cite{CMNS-SICOMP} show that choosing $\alpha$ randomly leads to better results. In particular, Chekuri et al.~\cite{CMNS-SICOMP} present an $e/(e-1)$-approximation algorithm for \abc{r_j}{C_j} by starting from an optimal preemptive schedule. Goemans~\cite{Goemans97} and Goemans, Queyranne, Schulz, Skutella, and Wang~\cite{GQSSW02} give approximation results for the more general weighted problem \abc{r_j}{w_jC_j} based on a preemptive schedule that is an optimal solution to an LP relaxation in time-indexed variables. Similarly, Schulz and Skutella~\cite{SS-RANDOM97} give an $(e+\varepsilon)$-approximation algorithm for \abc{r_j,\,prec}{w_jC_j} for any~$\varepsilon>0$.

Bansal and Khot prove in a recent landmark paper~\cite{BansalKhot-FOCS2009} that there is no $(2-\varepsilon)$-approximation algorithm for \abc{prec}{w_jC_j}, assuming a stronger version of the Unique Games Conjecture. Amb\"uhl, Mastrolilli, Mutsanas, and Svensson~\cite{AmbuehlMMS2011}, based on earlier work of Correa and Schulz~\cite{CorreaSchulz2005} and Amb\"uhl and Mastrolilli~\cite{AmbuehlMastrolilli2009}, prove an interesting relation between the approximability of \abc{prec}{w_jC_j} and the vertex cover problem

\paragraph{Our contribution.}
We present a $\sqrt{e}/(\sqrt{e}-1)$-approximation algorithm for \abc{r_j,\,prec}{w_jC_j} based on the following two ingredients: (i) For the problem \abc{r_j,\,prec,\,pmtn}{w_jC_j} we slightly strengthen the $2$-approxima\-tion result of Hall et al.~\cite{HSSW97} and show that preemptive list scheduling in order of increasing LP completion times on a machine running at double speed yields a schedule whose cost is at most the cost of an optimal schedule on a regular machine; see Section~\ref{sec:resource-augmentation}. (ii) Modifying the analysis of Chekuri et al.~\cite{CMNS-SICOMP} we show how to turn the preemptive schedule on the double speed machine into a nonpreemptive schedule on a regular machine while increasing the objective function by at most a factor of $\sqrt{e}/(\sqrt{e}-1)$; see Section~\ref{sec:alpha-points}. We conclude with a conjecture in Section~\ref{sec:conclusion}.

\section{Optimal preemptive schedules under resource augmentation}
\label{sec:resource-augmentation}

In this section we consider the preemptive single machine scheduling problem with release dates, precedence constraints and total weighted completion time objective \abc{r_j,\,prec,\,pmtn}{w_jC_j}. The best known approximation result for this problem is a $2$-approximation algorithm due to Hall et al.~\cite{HSSW97} that is based on an LP relaxation in completion time variables originally introduced by Queyranne~\cite{Queyranne93} and later refined by Goemans~\cite{Goemans96,Goemans97} for problems involving release dates.
Let $S\subseteq N$ denote a set of jobs and define
\[
  p(S):=\sum_{j\in S} p_j \qquad\mbox{ and }\qquad
  r_{\min}(S):=\min_{j\in S} r_j.
\]
The LP relaxation in completion time variables $C_j$, $j\in N$, looks as follows:
\begin{align}
\min\quad&\sum_{j\in N} w_jC_j\notag\\
\st\quad&C_j\leq C_k && \text{for all $j\prec k$,}\label{eq:precedences}\\
&\frac1{p(S)}\sum_{j\in S} p_jC_j\geq r_{\min}(S)+\tfrac12p(S) && \text{for all~$S\subseteq N$.}\label{eq:parallel} 
\end{align}
Notice that constraints~\eqref{eq:precedences} could be strengthened to $C_j+p_k\leq C_k$, which is however not necessary for our purposes. Goemans~\cite{Goemans97} argues that constraints~\eqref{eq:parallel} hold for a feasible schedule, even if $(C_j)_{j\in N}$ denotes the vector of \emph{mean busy times} of jobs instead of the larger completion times. Moreover, despite their exponential number, these constraints can be separated in polynomial time by efficient submodular function minimization~\cite{Goemans96}. Thus, an optimal solution~$C^*$ to the LP relaxation can be found in polynomial time and yields the LP lower bound $\sum_{j\in N}w_jC^*_j$ on the total weighted completion time of an optimal preemptive schedule. Reindex the set of jobs such that
\begin{align}
C^*_1\leq C^*_2\leq\dots\leq C^*_n\qquad\text{and}\qquad (j\prec k	\Rightarrow j<k).\label{eq:list}
\end{align}
The second condition in~\eqref{eq:list} is necessary to ensure that the total order of jobs by increasing indices extends the partial order given by the precedence constraints; notice, that in an optimal LP solution~$C_j^*$ might be equal to~$C_k^*$ for some pair of jobs with $j\prec k$.

Hall et al.~\cite{HSSW97} show that preemptive list scheduling according to list~\eqref{eq:list} yields a feasible preemptive schedule with completion times~$C_j\leq 2\cdot C^*_j$, $j\in N$, and thus a $2$-approximate solution. Exactly the same analysis implies a slightly stronger result in terms of resource augmentation as we show in the next lemma. We imagine a machine running at double speed such that each job $j\in N$ needs to be processed for $p_j/2$ time units only. 

\begin{lemma}\label{lem:pmtn-list-sched}
Preemptive list scheduling according to list~\eqref{eq:list} on a machine running at double speed yields a feasible preemptive schedule with completion times
$C'_j\leq C^*_j$ for all $j\in N$.	
\end{lemma}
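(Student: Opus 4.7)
The plan is, for each $j \in N$, to exhibit a subset $S' \subseteq \{1,\ldots,j\}$ (indices referring to the list~\eqref{eq:list}) containing $j$ such that the double-speed list schedule satisfies $C'_j \le r_{\min}(S') + p(S')/2$. Once this is established, the LP parallel inequality~\eqref{eq:parallel} on $S'$ combined with the ordering $C^*_i \le C^*_j$ for all $i \in S' \subseteq \{1,\ldots,j\}$ (guaranteed by~\eqref{eq:list}) implies $p(S')\,C^*_j \ge \sum_{i \in S'} p_i C^*_i \ge p(S')\bigl(r_{\min}(S') + p(S')/2\bigr)$, so that $C^*_j \ge r_{\min}(S') + p(S')/2 \ge C'_j$.

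First I would observe that every job in $S := \{1,\ldots,j\}$ has strictly higher priority in the list~\eqref{eq:list} than every job $k > j$. Consequently, in preemptive list scheduling the presence of jobs from $\{j+1,\ldots,n\}$ does not affect when a job from $S$ is processed: a low-priority job occupies the machine only at moments when no higher-priority job is available, and deleting such a job simply turns that moment into idle time. Hence I may restrict the analysis to the instance consisting of $S$ alone without changing $C'_j$, and continue working in that restricted schedule.

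In the restricted schedule I would then single out the maximal busy period $[t^*,C'_j]$ ending at $C'_j$, and let $S' \subseteq S$ denote the jobs that receive at least a positive amount of processing during this interval; note that $j \in S'$. A standard busy-period argument yields $r_{\min}(S') \ge t^*$: in the nontrivial case $t^* > 0$ the machine is idle at $t^{*-}$, so no $S$-job is available there, and since each $i \in S'$ is not yet completed at $t^*$ this forces $r_i \ge t^*$. On the other hand, during $(t^*,C'_j]$ the machine is continuously busy processing only $S'$-jobs at double speed, so the elapsed real time is at most the combined double-speed workload of $S'$, i.e., $C'_j - t^* \le p(S')/2$. Combining these two inequalities yields exactly $C'_j \le r_{\min}(S') + p(S')/2$, as required.

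The main subtlety, I anticipate, lies in the first step---carefully verifying that the processing pattern of the higher-priority jobs $\{1,\ldots,j\}$ is truly unchanged when the lower-priority jobs $\{j+1,\ldots,n\}$ are deleted from the instance. Once that reduction is justified, everything else is a textbook busy-period analysis followed by a direct application of constraint~\eqref{eq:parallel} to the specific subset $S'$ picked out by the busy period.
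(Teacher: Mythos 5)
Your proof is correct and reaches the same bottom line by the same key idea: exhibit a set $S'\subseteq\{1,\dots,j\}$ with $j\in S'$ such that $C'_j \le r_{\min}(S') + \tfrac12 p(S')$, then apply constraint~\eqref{eq:parallel} and the ordering $C^*_i\le C^*_j$. The main structural difference is your reduction step. You first argue that deleting the lower-priority jobs $\{j+1,\dots,n\}$ leaves the processing of $\{1,\dots,j\}$ unchanged, and then run a textbook busy-period argument on the reduced instance; the paper instead stays with the original schedule and encodes ``only jobs $\le j$ are processed'' as condition (iii) in the definition of its set $S$, proving a small claim that $[r_{\min}(S),C'_j]$ is the desired idle-free interval. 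Your reduction buys a cleaner busy-period analysis, at the price of having to justify (as you correctly flag) that preemptive list scheduling of a prefix of the list is oblivious to the suffix; this holds because a job $>j$ is processed only when no job $\le j$ is available, availability of jobs $\le j$ depends only on other jobs $\le j$ (predecessors have smaller index), and so deleting the suffix merely converts that processing to idle time. One small remark: the step ``since each $i\in S'$ is not yet completed at $t^*$ this forces $r_i\ge t^*$'' is slightly compressed---a job could a priori be unavailable at $t^{*-}$ because a predecessor is incomplete rather than unreleased---but chasing the predecessor chain downward (using $j\prec k\Rightarrow j<k$ and $r_j\le r_k$) produces a released, available, uncompleted job of index $\le j$, contradicting idleness; the paper's proof relies on exactly the same implicit chain argument when it asserts there is no idle time in $[r_h,C'_h]$, so the two proofs rest on the same observation there.
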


\begin{proof}
For a fixed $j\in N$, let $\setS$ denote the subset of jobs~$k\leq j$ such that (i)~$C'_k\leq C'_j$, (ii) the preemptive list schedule does not leave the double speed machine idle between times $C'_k$ and $C'_j$, and (iii) only jobs $\ell\leq j$ are being processed between times $C'_k$ and $C'_j$. 

\emph{Claim:} The preemptive list schedule processes the set of jobs~$\setS$ without interruption during the time interval~$I:=[r_{\min}(\setS),C_j']$. 

To prove the claim, consider a job $h\in\setS$ with~$r_h=r_{\min}(\setS)$. Since $\ell\prec h$ implies $r_{\ell}\leq r_h$, there is no idle time within the time interval $[r_h,C_h']$ and only jobs $\ell\leq h\leq j$ are being processed there. Moreover, due to (ii) there is no idle time within the time interval $[C_h',C_j']$ and only jobs $\ell\leq j$ are being processed there due to~(iii). As a consequence, there is no idle time in $I$ and only jobs $\ell\leq j$ are being processed there. Therefore, every job~$k$ with $C_k'\in I$ satisfies conditions (i), (ii), and (iii), and is thus contained in~$\setS$. Finally, since any job $\ell$ that the preemptive list schedule processes in $I$ satisfies $\ell\leq j$ and is therefore completed before job~$j$ in~$I$, the claim follows.

The claim implies that $C_j'=r_{\min}(\setS)+\frac12p(\setS)$. Finally,
\begin{align*}
C^*_j\geq\frac1{p(\setS)}\sum_{k\in\setS}p_kC^*_k\geq r_{\min}(\setS)+\tfrac12 p(\setS)=C'_j,
\end{align*}
where the first inequality holds as $C_k^*\leq C_j^*$ for $k\leq j$ and the second inequality follows by the LP constraints~\eqref{eq:parallel}.
\end{proof}

With the help of Lemma~\ref{lem:pmtn-list-sched} we can now prove the main result of this section.

\begin{theorem}\label{thm:resource-augmentation}
For a single machine running at double speed one can obtain in polynomial time a preemptive list schedule whose total weighted completion time is at most the LP lower bound $\sum_{j\in N}w_jC^*_j$ on the optimal total weighted completion time of a preemptive schedule for a regular single machine.
\end{theorem}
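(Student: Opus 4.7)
The proof is essentially a direct consequence of Lemma~\ref{lem:pmtn-list-sched} combined with the standard LP lower bound, so my plan has three short parts and the writing burden is almost entirely bookkeeping.

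First, I would argue that an optimal LP solution~$C^*$ to the relaxation defined by~\eqref{eq:precedences}--\eqref{eq:parallel} can be found in polynomial time. The only nontrivial point is that the parallel-inequality constraints~\eqref{eq:parallel} are exponential in number, but as already noted in the excerpt and attributed to Goemans~\cite{Goemans96}, their separation problem reduces to submodular function minimization, which is polynomial-time solvable. I would then point out that $\sum_{j\in N} w_j C^*_j$ is a valid lower bound on the optimal preemptive total weighted completion time on a \emph{regular} single machine, because the completion times (or mean busy times) of every feasible preemptive schedule satisfy~\eqref{eq:precedences} and~\eqref{eq:parallel}.

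Second, I would build the list~\eqref{eq:list}. This requires a total order on~$N$ that is simultaneously nondecreasing in $C^*_j$ and an extension of~$\prec$. Such an order always exists because constraint~\eqref{eq:precedences} guarantees $C^*_j \leq C^*_k$ whenever $j\prec k$; any topological sort of the precedence DAG that also sorts jobs by $C^*_j$-value (breaking ties according to the DAG) yields a valid list and is computable in polynomial time.

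Third, I would run preemptive list scheduling on a machine of speed two according to~\eqref{eq:list}. By Lemma~\ref{lem:pmtn-list-sched}, the resulting completion times satisfy $C'_j \leq C^*_j$ for all $j\in N$. Since $w_j\geq 0$, summing yields
\[
\sum_{j\in N} w_j C'_j \;\leq\; \sum_{j\in N} w_j C^*_j,
\]
and the right-hand side is at most the optimum preemptive objective value on a regular machine, as required. There is no real obstacle here; all of the substance has been absorbed into Lemma~\ref{lem:pmtn-list-sched}. The only subtlety worth emphasizing for the reader is that the double-speed schedule is being compared against a lower bound for the regular-speed problem, which is legitimate because the LP is formulated for, and lower-bounds, the regular-speed optimum throughout.
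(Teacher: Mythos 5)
Your proposal is correct and follows the same approach as the paper's own proof: solve the LP in polynomial time, construct the list ordering~\eqref{eq:list}, apply preemptive list scheduling on the double-speed machine, and invoke Lemma~\ref{lem:pmtn-list-sched} to conclude. Your version spells out a few details (polynomial-time separability, validity of the LP lower bound, tie-breaking in the list) that the paper compresses into ``as discussed above,'' but the argument is identical in substance.
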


\begin{proof}
As discussed above, an optimal solution~$C^*$ to the LP relaxation can be obtained in polynomial time. Our algorithm then applies preemptive list scheduling according to list~\eqref{eq:list} on a double speed machine. By Lemma~\ref{lem:pmtn-list-sched}, the total weighted completion time of the resulting preemptive list schedule is bounded from above by the LP lower bound $\sum_{j\in N}w_jC^*_j$.
\end{proof}

\section{Scheduling in order of alpha-points}
\label{sec:alpha-points}

In this section we show how to turn a preemptive schedule on the double speed machine into a nonpreemptive schedule on a regular machine while increasing the total weighted completion time by a factor at most~$2.542$. 

\begin{theorem}\label{thm:alpha-points}
Given a feasible preemptive list schedule $S'$ on a double speed machine with completion times $C'_j$, $j\in N$, one can obtain in polynomial time a feasible nonpreemptive schedule on a regular speed machine with total weighted completion time 
\begin{align*}
\sum_{j\in N}w_jC_j\leq\frac{\sqrt{e}}{\sqrt{e}-1}\sum_{j\in N}w_jC'_j.
\end{align*}
\end{theorem}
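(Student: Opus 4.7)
The plan is to adapt the random $\alpha$-point approach of Chekuri, Motwani, Natarajan, and Stein~\cite{CMNS-SICOMP} to the double-speed preemptive schedule~$S'$. For each $\alpha \in (0,1]$, define the $\alpha$-point $t_j(\alpha)$ of job~$j$ as the first time at which an $\alpha$-fraction of $j$'s processing has been completed in~$S'$; then $r_j \leq t_j(\alpha) \leq C_j'$, and since any $k$ with $j \prec k$ is processed entirely after $C_j'$ in~$S'$ we also have $t_j(\alpha) \leq t_k(\alpha)$. Hence ordering the jobs by nondecreasing $\alpha$-points extends ``$\prec$''. I would draw $\alpha$ at random from the density $f(\alpha) = e^{\alpha/2}/(2(\sqrt{e}-1))$ on $(0,1]$ and then apply nonpreemptive list scheduling on the regular machine according to this order; denote the resulting completion time of~$j$ by $C_j(\alpha)$.

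The first main step is to prove a pointwise bound relating $C_j(\alpha)$ to $t_j(\alpha)$ and $C_j'$. Let $[\sigma,C_j(\alpha)]$ be the busy period containing~$j$ in the nonpreemptive list schedule and~$B$ the set of jobs processed there, so $C_j(\alpha) = \sigma + \sum_{k\in B} p_k$ and every $k\in B$ satisfies $t_k(\alpha)\leq t_j(\alpha)$. Because $S'$ runs at double speed, the total amount of work processable in $[\sigma,t_j(\alpha)]$ is at most $2(t_j(\alpha)-\sigma)$; moreover each $k\in B$ with $r_k\geq\sigma$ contributes at least $\alpha p_k$ of work to this interval, since all of its processing happens within it and by time~$t_j(\alpha)$ at least an $\alpha$-fraction has been done. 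After bookkeeping the remaining jobs in~$B$ (those whose release dates precede~$\sigma$ and whose $\alpha$-fractions in $S'$ may have been processed before $\sigma$), one obtains a deterministic bound of the form $C_j(\alpha)\leq \psi(\alpha)\cdot C_j'$ for a suitable function~$\psi$.

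The second main step is to take the expectation of this pointwise bound against $f(\alpha)\,d\alpha$. A change of variables from $\alpha$ to real time (using that $j$ is processed in $S'$ at rate~$2$ when active and rate~$0$ otherwise) turns the expectation into an integral over $[r_j,C_j']$ weighted by $e^{\alpha(t)/2}$, where $\alpha(t)$ is the fraction of~$j$ completed by time~$t$; the exponent~$1/2$ is tuned precisely to compensate for the factor~$2$ from double speed, and a direct computation then yields $\mathbb{E}_\alpha[C_j] \leq \sqrt{e}/(\sqrt{e}-1)\cdot C_j'$. Summing over~$j$ with weights~$w_j$ gives the claim in expectation, and since $\alpha\mapsto C_j(\alpha)$ is piecewise constant with only polynomially many breakpoints (where two $\alpha$-points coincide), the best~$\alpha$ can be selected deterministically. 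The main obstacle is the careful pointwise bound: correctly handling jobs in~$B$ whose release dates precede~$\sigma$ requires additional case analysis, and it is precisely in this step that the speed factor~$2$ from Theorem~\ref{thm:resource-augmentation} translates into the improved constant $\sqrt{e}/(\sqrt{e}-1)$ in place of the $e/(e-1)$ of~\cite{CMNS-SICOMP}.
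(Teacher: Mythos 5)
Your overall plan matches the paper's proof — order by $\alpha$-points, draw $\alpha$ from the density $e^{\alpha/2}/(2(\sqrt{e}-1))$, take expectations, derandomize over the $O(n)$ breakpoints — and you correctly identified why the exponent $1/2$ appears. However, the heart of the argument, your ``first main step,'' is where the proposal goes wrong, and it goes wrong in a way that would change the final constant.

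You claim a pointwise bound of the form $C_j(\alpha)\leq\psi(\alpha)\cdot C_j'$ for a fixed function $\psi$. No such bound with a usable $\psi$ exists. To see this, note that in the worst case the busy-period set $B$ can consist entirely of jobs whose $\eta$-values (fraction completed by time $C'_j$ in $S'$) are barely above $\alpha$; then the only available volume argument forces $\psi(\alpha)\approx 1+2/\alpha$, and $\mathrm{E}[1/\alpha]$ diverges under your chosen density. The paper instead proves a \emph{job-indexed} bound,
\[
C_k^{\alpha}\leq C'_k+\sum_{j:\eta_j\geq\alpha}\Bigl(1+\tfrac{\alpha-\eta_j}{2}\Bigr)p_j,
\]
and the order of operations is crucial: one first integrates over $\alpha$ \emph{for each job $j$ separately}, obtaining $\mathrm{E}[C_k^{\alpha}]\leq C'_k+\frac{1}{\sqrt{e}-1}\sum_j\eta_j p_j/2$, and only then invokes $C'_k\geq\sum_j\eta_j p_j/2$. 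Collapsing the right-hand side to $\psi(\alpha)C'_k$ before averaging over $\alpha$ throws away exactly the cancellation that yields $\sqrt{e}/(\sqrt{e}-1)$.

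The second concrete gap is the ``bookkeeping'' you defer for jobs in $B$ whose release dates precede $\sigma$. You charge each $k\in B$ with $r_k\geq\sigma$ an $\alpha$-fraction of work in $[\sigma,t_j(\alpha)]$, but you give no mechanism for the remaining jobs, and that case is not a minor nuisance — it is where a naive charging scheme breaks. The paper avoids the case split entirely by a cleaner observation: if $\ell$ is the first job of the busy period, then $\sigma=r_\ell\leq C_\ell'(\alpha)\leq C_j'(\alpha)$ for every $j$ in the period (by the $\alpha$-point ordering and the fact that $S'$ obeys release dates). Hence the $(\eta_j-\alpha)p_j/2$ units of processing that $S'$ performs on each $j$ between its $\alpha$-point and $C'_k$ all fall inside $[\sigma,C'_k]$, and a single machine-capacity argument over that interval — not over $[\sigma,t_j(\alpha)]$ — gives the needed inequality for \emph{all} jobs in $B$ at once, regardless of their release dates. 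Replacing your interval $[\sigma,t_j(\alpha)]$ with $[\sigma,C'_k]$ and the ``at least $\alpha p_k$ of work'' lower bound with the exact $(\eta_j-\alpha)p_j/2$ is what makes the lemma both correct and tight.
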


Theorems~\ref{thm:resource-augmentation} and~\ref{thm:alpha-points} together yield the new approximation result for the scheduling problem under consideration.

The proof of Theorem~\ref{thm:alpha-points} relies on list scheduling in order of $\alpha$-points: For $0<\alpha\leq1$, the $\alpha$-point $C'_j(\alpha)$ of job $j$ with respect to schedule $S'$ is the first point in time when job $j$ has been processed for $\alpha\cdot p_j/2$ time on the double speed machine. Consider the list schedule $S_{\alpha}$ obtained by scheduling jobs in order of increasing $C'_j(\alpha)$ on a regular speed machine (notice that this order is in line with the precedence constraints as the preemptive schedule~$S'$ is feasible). Let $C_j^{\alpha}$ denote job $j$'s completion time in the list schedule $S_{\alpha}$. Moreover, for a fixed job~$k$, let $\eta_j$ denote the fraction of job $j\in N$ that has been processed in schedule $S'$ on the double speed machine by time $C'_k$. In particular,
\begin{align}
C'_k\geq\sum_{j\in N}\eta_j\frac{p_j}2.\label{eq:lower-bound-C_k}
\end{align}

The following lemma is a slight modifications of a more general observations presented in~\cite[Corollary~2.3.3]{Skut-Diss98} and~\cite[Corollary~3.3]{Skut-APPOL}. We give a new, somewhat simpler proof.

\begin{lemma}\label{lem:alpha-points}
\begin{align}
C_k^{\alpha}\leq C'_k+\sum_{j:\eta_j\geq\alpha}\Bigl(1+\frac{\alpha-\eta_j}2\Bigr)p_j.\label{eq:alpha-points}
\end{align}	
\end{lemma}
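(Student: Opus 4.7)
The plan is to exploit the structure of the nonpreemptive list schedule $S_\alpha$ near $k$'s completion. Since $S_\alpha$ is a list schedule with release dates, the completion time of $k$ arises from the last maximal busy block ending at $C_k^\alpha$: let $h$ be the first job of this block and $B$ the set of jobs processed in it, so that $C_k^\alpha = r_h + \sum_{j\in B} p_j$. A first easy observation is that $B \subseteq \{j : \eta_j \geq \alpha\}$, since every $j \in B$ comes no later than $k$ in the $\alpha$-point order, and hence $C'_j(\alpha) \leq C'_k(\alpha) \leq C'_k$ forces $\eta_j \geq \alpha$.

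Next I would bound $r_h$ using the capacity of the double-speed machine on the interval $[r_h, C'_k]$. For each $j \in B$, let $\mu_j p_j/2$ denote the amount of $j$ processed in $[r_h, C'_k]$ in $S'$, and write $\beta_j = \eta_j - \mu_j$ for the fraction of $j$ already processed by time $r_h$. Summing busy times in the interval gives $\sum_{j \in B} \mu_j p_j/2 \leq C'_k - r_h$, so
\begin{align*}
C_k^\alpha \leq C'_k + \sum_{j\in B} \left(1 - \frac{\eta_j - \beta_j}{2}\right) p_j.
\end{align*}

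The key step, which I expect to be the main obstacle, is showing $\beta_j \leq \alpha$ for every $j \in B$. For $j = h$ this is trivial, as $h$ cannot be processed before its release date, so $\beta_h = 0$. For $j \in B \setminus \{h\}$ the argument uses the $\alpha$-point order: since $j$ comes after $h$ in that order, $C'_j(\alpha) \geq C'_h(\alpha) \geq r_h + \alpha p_h/2 \geq r_h$, i.e., $j$ has not yet reached its $\alpha$-point by time $r_h$, so $\beta_j \leq \alpha$. Plugging this into the previous display and using $1 - (\eta_j - \beta_j)/2 \leq 1 + (\alpha - \eta_j)/2$ yields $C_k^\alpha \leq C'_k + \sum_{j\in B}(1 + (\alpha - \eta_j)/2) p_j$. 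Finally, extending the sum from $B$ to the full set $\{j : \eta_j \geq \alpha\}$ only adds terms with non-negative coefficient $1 + (\alpha - \eta_j)/2 \geq 1/2$, which preserves the inequality and yields the stated bound.
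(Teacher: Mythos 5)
Your proof is correct and follows essentially the same route as the paper: identify the maximal busy block ending at $C_k^\alpha$, write $C_k^\alpha = r_h + \sum_{j\in B}p_j$, then bound $r_h$ via the double-speed machine's capacity on $[r_h,C'_k]$ and observe that jobs in $B$ have reached at most their $\alpha$-point by time $r_h$. The paper phrases the capacity bound slightly more directly (using $r_h\leq C'_j(\alpha)$ to say $S'$ processes each $j\in B$ for $\frac12(\eta_j-\alpha)p_j$ time units in $[r_h,C'_k]$, rather than introducing the intermediate quantities $\mu_j,\beta_j$), but the underlying argument is the same.
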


\begin{proof}
Let $X$ denote the subset of all jobs~$j$ scheduled by the list schedule~$S_{\alpha}$ no later than job~$k$ such that there is no idle time between times~$C_j^{\alpha}$ and~$C_k^{\alpha}$. In particular, $k\in X$. Moreover, let $\ell\in X$ denote the job that $S_{\alpha}$ schedules first among the jobs in~$X$. By definition of~$X$, job~$\ell$ is started at its release date~$r_{\ell}$. In particular,
\begin{align}
C_k^{\alpha}=r_{\ell}+\sum_{j\in X}p_j.\label{eq:alpha1}
\end{align} 
Since schedule~$S'$ obeys release dates, we get~$r_{\ell}\leq C_{\ell}'(\alpha)\leq C_j'(\alpha)$ for every job~$j\in X$. Moreover, by definition of~$\eta_j$, schedule~$S'$ processes every job~$j\in X$ in the time interval $[C_j'(\alpha),C_k']$ for exactly $\frac12(\eta_j-\alpha)p_j$ time units on the double speed machine. Since the machine can process at most one job at a time,
\begin{align}
C_k'\geq r_{\ell}+\sum_{j\in X}\frac{\eta_j-\alpha}2 p_j.\label{eq:alpha2}
\end{align}
Combining~\eqref{eq:alpha1} and~\eqref{eq:alpha2} yields
\begin{align}
C_k^{\alpha}\leq C_k'+\sum_{j\in X}\Bigl(1+\frac{\alpha-\eta_j}2\Bigr)p_j.\label{eq:alpha3}
\end{align}
Notice that the sum on the right hand side of~\eqref{eq:alpha-points} goes over a superset of~$X$ since $C_j'(\alpha)\leq C_k'(\alpha)\leq C_k'$ and thus $\eta_j\geq\alpha$ for every job~$j\in X$. Finally, as $\alpha>0$ and $\eta_j\leq1$, the summand of every job~$j$ in~\eqref{eq:alpha-points} is at least~$p_j/2$ and thus nonnegative. Therefore inequality~\eqref{eq:alpha-points} follows immediately from~\eqref{eq:alpha3}.
\end{proof}

We now draw $\alpha$ randomly from $(0,1]$ with density function
\begin{align*}
f(\alpha):=\frac{e^{\alpha/2}}{2(\sqrt{e}-1)}.
\end{align*}
Notice that for $0\leq\eta\leq1$
\begin{align*}
\int_0^{\eta}f(\alpha)\Bigl(1+\frac{\alpha-\eta}2\Bigr)\,\text{d}\alpha	= \frac{\eta}{2(\sqrt{e}-1)}.
\end{align*}
Thus, by Lemma~\ref{lem:alpha-points},
\begin{align*}
\text{E}[C_k^{\alpha}]&\leq C'_k+\sum_{j\in N}p_j\int_0^{\eta_j}f(\alpha)\Bigl(1+\frac{\alpha-\eta_j}2\Bigr)\,\text{d}\alpha\\
&= C'_k+\frac1{\sqrt{e}-1}\sum_{j\in N}\eta_j \frac{p_j}2\leq\frac{\sqrt{e}}{\sqrt{e}-1}C'_k,
\end{align*}
where the last inequality follows from~\eqref{eq:lower-bound-C_k}. By linearity of expectation, the expected total weighted completion time of the nonpreemptive list schedule in order of random $\alpha$-points is at most $\sqrt{e}/(\sqrt{e}-1)$ times larger than the total weighted completion time of the given preemptive schedule~$S'$.

To complete the proof of Theorem~\ref{thm:alpha-points} we need to derandomize the choice of~$\alpha$. In a preemptive list schedule, preemption of a job can only occur when another job is released. In particular, there can be at most $n-1$ preemptions and therefore at most $n$ combinatorially different choices of~$\alpha$. As observed by Goemans~\cite{Goemans97}, an~$\alpha$ minimizing the total weighted completion time of the resulting list schedule can thus be found in $O(n^2)$ time. 


As a consequence of Theorems~\ref{thm:resource-augmentation} and~\ref{thm:alpha-points} we can state the following Corollary on the quality of the lower bound given by an optimal solution to the LP relaxation in Section~\ref{sec:resource-augmentation}.

\begin{corollary}\label{cor:LP-quality}
For instances of the scheduling problem \abc{r_j,\,prec}{w_jC_j}, the value of an optimal solution to the LP relaxation in completion time variables considered in Section~\ref{sec:resource-augmentation} is at most a factor~$(\sqrt{e}-1)/\sqrt{e}$ smaller than the total weighted completion time of an optimal schedule.
\end{corollary}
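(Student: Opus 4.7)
The plan is to chain Theorems~\ref{thm:resource-augmentation} and~\ref{thm:alpha-points} and observe that the resulting feasible nonpreemptive schedule is an upper bound on the optimum.

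Concretely, I would let $C^*$ denote an optimal solution to the LP relaxation and write $\LP := \sum_{j\in N}w_jC_j^*$ and $\text{OPT}$ for the total weighted completion time of an optimal schedule for \abc{r_j,\,prec}{w_jC_j}. First, invoke Theorem~\ref{thm:resource-augmentation} to produce in polynomial time a feasible preemptive list schedule~$S'$ on a double-speed machine with $\sum_{j\in N}w_jC_j'\leq \LP$. Then feed $S'$ into Theorem~\ref{thm:alpha-points} to obtain a feasible nonpreemptive schedule on a regular-speed machine whose cost satisfies
\begin{align*}
\sum_{j\in N}w_jC_j\leq\frac{\sqrt{e}}{\sqrt{e}-1}\sum_{j\in N}w_jC_j'\leq\frac{\sqrt{e}}{\sqrt{e}-1}\,\LP.
\end{align*}

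Since this schedule is feasible for \abc{r_j,\,prec}{w_jC_j}, its cost upper bounds $\text{OPT}$, so $\text{OPT}\leq \tfrac{\sqrt{e}}{\sqrt{e}-1}\,\LP$. Rearranging gives $\LP\geq \tfrac{\sqrt{e}-1}{\sqrt{e}}\,\text{OPT}$, which is precisely the statement of the corollary.

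There is no real obstacle here: every ingredient has already been established in the two preceding theorems, and the argument is just the standard observation that a polynomial-time primal-rounding algorithm with ratio $\rho$ certifies an integrality gap of at most~$\rho$. I would only take care to note explicitly that the nonpreemptive schedule constructed through Theorems~\ref{thm:resource-augmentation} and~\ref{thm:alpha-points} is genuinely feasible for the original (single-speed, nonpreemptive) problem, so that comparing its cost to $\text{OPT}$ is legitimate.
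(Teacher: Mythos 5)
Your proof is correct and follows exactly the same route as the paper: chain Theorem~\ref{thm:resource-augmentation} with Theorem~\ref{thm:alpha-points} to bound the cost of a feasible nonpreemptive schedule by $\frac{\sqrt{e}}{\sqrt{e}-1}$ times the LP value, then note this cost upper bounds the optimum. The extra care you take to point out that the constructed schedule is genuinely feasible for the single-speed nonpreemptive problem is a reasonable explicit note; the paper leaves this implicit.
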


\begin{proof}
By Theorems~\ref{thm:alpha-points} and~\ref{thm:resource-augmentation} our approximation algorithm constructs a feasible schedule whose total weighted completion time is bounded by
\begin{align}
\sum_{j\in N}w_jC_j\leq\frac{\sqrt{e}}{\sqrt{e}-1}\sum_{j\in N}w_jC'_j\leq\frac{\sqrt{e}}{\sqrt{e}-1}\sum_{j\in N}w_jC^*_j,\label{eq:LP-quality}
\end{align}
where $C'$ denotes the vector of completion times of the preemptive double speed machine schedule $S'$ and $C^*$ is an optimal LP solution. Since the left-hand side of~\eqref{eq:LP-quality} is an upper bound on the total weighted completion time of an optimal schedule, the result follows.
\end{proof}

\section{Concluding remarks}
\label{sec:conclusion}

Despite our enthusiastic yet ultimately fruitless efforts to improve the presented approximation result, we feel that the new performance guarantee \mbox{$\sqrt{e}/(\sqrt{e}-1)$} is hardly the last word on the considered scheduling problem. On the other hand, the history of approximation algorithms for the special case \abc{prec}{w_jC_j} and, in particular, recent non-approximability results make it seem somewhat unlikely to achieve a performance ratio strictly better than~$2$. Therefore, and due lack of imagination of other meaningful approximation ratios, we conclude with the following conjecture, granting an extra $+\varepsilon$ in the performance ratio to the release dates.

\begin{conjecture}
For any $\varepsilon>0$, there is a $(2+\varepsilon)$-approximation algorithm for \abc{r_j,\,prec}{w_jC_j}.
\end{conjecture}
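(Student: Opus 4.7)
The plan is to bypass the $\alpha$-point conversion, whose loss factor $\sqrt{e}/(\sqrt{e}-1)$ is the current bottleneck, and instead round the LP from Section~\ref{sec:resource-augmentation} directly, targeting the integrality gap of $2$ known for the precedence-only case~\cite{schulz96}. A first step would be to strengthen the LP with precedence-aware cuts; beyond the obvious $C_j+p_k\leq C_k$ for $j\prec k$ mentioned after~\eqref{eq:parallel}, one could restrict~\eqref{eq:parallel} to precedence-closed subsets~$S$ or add inequalities of the form $C_k\geq r_{\min}(\operatorname{pred}(k)\cup\{k\})+\tfrac12\,p(\operatorname{pred}(k)\cup\{k\})$, whose separation should still reduce to submodular function minimization. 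The goal is to bring the LP gap down to $2$ (or $2+o(1)$), and then extract a $(2+\varepsilon)$-approximation from it by direct rounding.

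With the strengthened LP at hand, the concrete rounding I would try is a geometric decomposition of time. Fix $\delta>0$, round each release date up to the next power of~$(1+\delta)$, and partition the time axis into windows $W_i:=[(1+\delta)^i,(1+\delta)^{i+1})$. Assign each job $j$ to the window containing~$C_j^*$; since list~\eqref{eq:list} extends the precedence order, scheduling windows in increasing order of~$i$, and jobs within a window in LP order, yields a feasible nonpreemptive schedule. Inside a single window, jobs effectively share a common release date up to a $(1+\delta)$ multiplicative slack, so Schulz's argument for \abc{prec}{w_jC_j}~\cite{schulz96} should bound the completion time of each such job by $2(1+\delta)\,C_j^*$. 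Choosing $\delta=\varepsilon/3$ would then give the desired ratio once cross-window overflow is accounted for.

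The hard part will be precisely this cross-window overflow: when the total processing time of jobs assigned to window $W_i$ exceeds its length~$\delta(1+\delta)^i$, part of the load spills into later windows and delays every job assigned there. A charging argument must pay for the spill out of the LP without double-counting, and it must respect precedence chains $j_1\prec j_2\prec\cdots$ that may thread through many windows, since a delay to a chain's head propagates to all its descendants. This is precisely where the strengthened precedence cuts such as $C_j^*+p_k\leq C_k^*$ would become critical: they let one charge the delay of a successor $k$ against its own LP value $C_k^*$ rather than against the already-exhausted LP value of its predecessor.

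A fallback plan, should direct rounding resist, is to keep the two-stage structure of this paper but replace the $\alpha$-point conversion by a structure-specific one. The preemptive double-speed schedule produced by Theorem~\ref{thm:resource-augmentation} has at most $n-1$ preemptions and comes with the tight bound $\sum_{j\in N}w_jC_j'\leq\sum_{j\in N}w_jC_j^*$, so even a crude combinatorial compaction --- for instance, repeatedly gluing the fragments of the heaviest interrupted job into a single contiguous block and using the ensuing idle space to absorb the delay --- might incur a loss factor strictly below $\sqrt{e}/(\sqrt{e}-1)$. Iterating such a conversion in combination with Theorem~\ref{thm:resource-augmentation} is a plausible route to $(2+\varepsilon)$ without rebuilding anything from Section~\ref{sec:alpha-points}.
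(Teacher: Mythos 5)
The statement you are addressing is a \emph{conjecture} in the paper, explicitly left open in Section~\ref{sec:conclusion}; the paper offers no proof of it, and neither does your proposal. What you have written is a research sketch, not an argument: every load-bearing step is qualified by ``should,'' ``would,'' or ``might,'' and the decisive step is flagged by you yourself as ``the hard part'' without resolution. Concretely, the cross-window overflow is not a technicality to be deferred --- it is precisely the obstruction that makes \abc{r_j,\,prec}{w_jC_j} harder than \abc{prec}{w_jC_j}. When the total processing assigned to a window $W_i$ exceeds $\delta(1+\delta)^i$, the spill delays every later window and propagates down precedence chains; you propose no charging scheme that pays for it. The LP ``strengthenings'' you mention do not obviously help either: \emph{restricting}~\eqref{eq:parallel} to precedence-closed $S$ would drop constraints and thus \emph{weaken} the relaxation, and the per-job cut $C_k\geq r_{\min}(\operatorname{pred}(k)\cup\{k\})+\tfrac12 p(\operatorname{pred}(k)\cup\{k\})$, while valid, controls a single completion time rather than the aggregate load of a window, which is what the spill argument would need.

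The intermediate claim that ``inside a single window, jobs effectively share a common release date, so Schulz's argument bounds each $C_j$ by $2(1+\delta)C_j^*$'' is also unsupported. Schulz's analysis for \abc{prec}{w_jC_j}~\cite{schulz96} bounds the list-schedule completion time $C_j=p(\{1,\dots,j\})$ against the constraint~\eqref{eq:parallel} applied to the \emph{entire} prefix $\{1,\dots,j\}$, crucially using that there is no idle time and no earlier occupancy of the machine. Once you partition by windows, that prefix threads through many windows, the machine is already busy before $W_i$ opens, and release dates inside $W_i$ still span a factor $(1+\delta)$; the per-window bound does not follow from the cited argument. The fallback ``combinatorial compaction'' is even more speculative: no loss bound for ``gluing fragments of the heaviest interrupted job'' is given, and there is no a priori reason such a step beats the already-optimized random-$\alpha$ density used in Section~\ref{sec:alpha-points}. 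In short, you have identified the right question --- can one beat the $\alpha$-point loss by rounding the LP more directly and charging spill against precedence-aware cuts? --- but you have not answered it. The conjecture remains open, exactly as the paper states.
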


\paragraph{Acknowledgements.} We are grateful to an anonymous referee for various helpful comments that have lead to an improved presentation of the paper. This work is supported by the Einstein Foundation Berlin.

\footnotesize
\bibliographystyle{plain}
\bibliography{../BibTeX/mybib.bib}

\end{document}